\documentclass[letterpaper, 10 pt, conference]{IEEEconf}  

\IEEEoverridecommandlockouts                              
\overrideIEEEmargins                                      
\title{\LARGE \bf
Lettuce modelling for growth control in precision agriculture
}

\author{William Rohde$^{1}$ and Fulvio Forni$^{1}$
\thanks{This work was supported by the Engineering and Physical Sciences Research Council and AgriFoRwArdS CDT [EP/S023917/1], by Cambs Farms Growers Limited (G’s Growers), and by
the project ‘Agri-station for Automation and Growth Optimization’, within the Observatory for Human-Machine Interaction at the University of Cambridge.}
\thanks{$^{1}$Department of Engineering, University of Cambridge  {\tt\small wr281@cam.ac.uk}, {\tt\small f.forni@eng.cam.ac.uk}}%
}

\usepackage{subfig}
\usepackage{graphicx}      
\usepackage{amssymb}
\usepackage{amsmath}
\usepackage{color}

\bibliographystyle{IEEEtran}

\newtheorem{thm}{Theorem}[section]

\begin{document}

\maketitle
\thispagestyle{empty}
\pagestyle{empty}

\begin{abstract}
  Improving the efficiency of agriculture is a growing priority due to food security issues, environmental concerns, and economics. 
  Precision agriculture and variable rate application technology could enable increases in yield while maintaining or reducing fertiliser use. 
  However, this requires the development of control algorithms which are suitable for the challenges of agriculture. 
  In this paper, we propose a new mechanistic open model of lettuce growth for use in control of precision agriculture.
  We demonstrate that our model is cooperative and fits well to experimental data. 
  We use the model to show, via simulations, that a simple proportional distributed control law increases crop uniformity and yield without increasing nitrogen use, even in the presence of sparse actuation and noisy observations.   
\end{abstract}

\section{Introduction}

The agriculture industry faces several challenges in the near future: global population growth requires increased food production, increased water scarcity requires optimisation of irrigation, and the pollution and emissions associated with fertiliser, pesticides and fossil fuel use need to be minimised.
Precision agriculture offers a potential solution to these challenges, by allowing us to apply actions traditionally performed uniformly across the field at a scale approaching individual management for each plant.
This enables the optimisation of the resource use.

From a systems perspective, precision agriculture can be modelled by a typical feedback control system, 
as shown in Fig.~\ref{fig:PA_feedback}. 
Actuation is provided by variable rate application (VRA) technology, which enables the amount of fertiliser or pesticide (or any other controlled quantity) to be adjusted for different locations in the field.
Sensing the size of individual plants in the field can be achieved through drone imagery, the addition of sensors to traditional farming equipment, and robotic field scouting. 
At present, where precision agriculture is implemented, the decision making process in the precision agriculture feedback loop is rarely automated with crop management decisions being performed by expert growers. 

The design of controllers to perform this automated decision making has seen some interest (see \cite{Cobbenhagen2021-OpportunitiesControlInPrecisionAgriculture}). 
However further research in this area is needed.
A lack of algorithms for automated decision making in VRA of nitrogen fertiliser was listed as a limiting factor by Padilla et al. \cite{padilla2018-proximalOpticalSensorsForNitrogenReview}. 
Similarly, Bhakta et al. \cite{Bhakta2019-StateOfTheArtTechnologyInPrecisionAgriculture} highlight the lack  of a controller that is suitable for the full range of VRA, namely: variable rate applications of fertiliser, pesticide, herbicide, and irrigation.

Applying control to crops in the agriculture sector differs significantly from well-established industrial control scenarios. 
In a typical control process, the feedback loop updates the control signal based on errors between the desired output and observed output at a high frequency,
to reduce the effects of disturbances and modelling uncertainties. 
When applying control in precision agriculture this is less feasible. 
The spatial scale and economic constraints of agriculture prevent machinery from running through the field at a high frequency. 
Currently it is unlikely that the VRA machinery could be run through the field more than a few times during a crop's growth. 
Similarly, sensing at the scale of individual plants is constrained due to the high cost in comparison to the small profit margins on agricultural produce.



\begin{figure}[htbp]
  \begin{center}
  \includegraphics[clip, trim=0.4cm 0.2cm 0.4cm 0.45cm, width=\columnwidth]{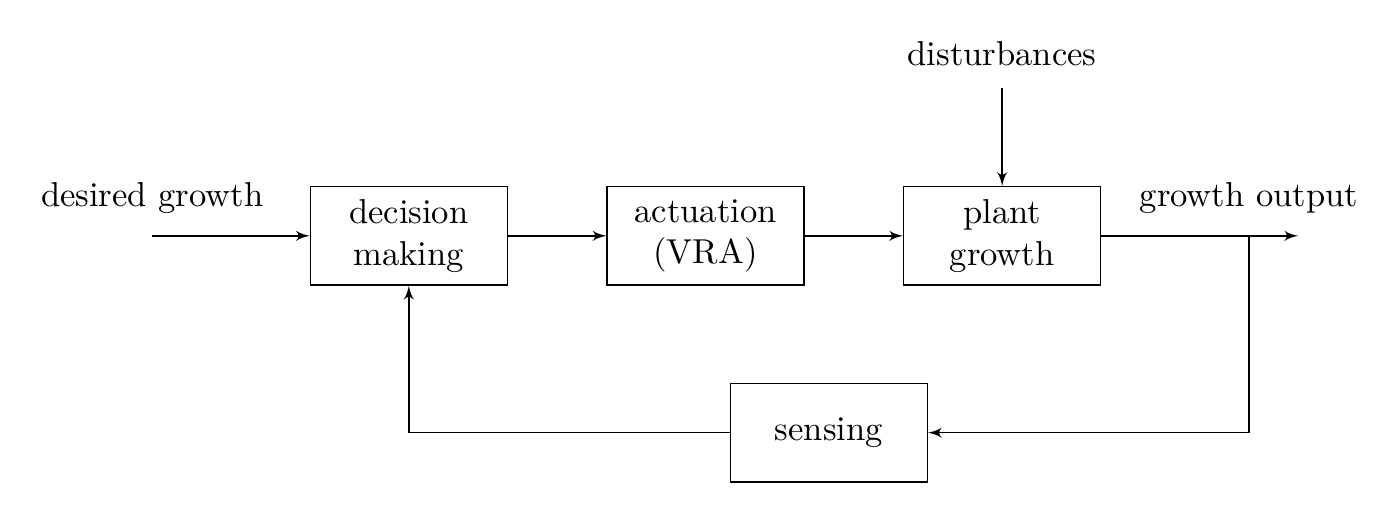}    
  \caption{Precision agriculture as a feedback process} 
  \label{fig:PA_feedback}
  \end{center}
  \end{figure} \vspace{-3mm}
  
The main contribution of this paper is the development of a new model for the growth of lettuce crop, with desirable properties for control at the scale of an agricultural field. Our cooperative compartmental model captures the dynamics of single plants using a reduced number of states. This enables large scale simulations.
Our model is validated using data from controlled lab experiments and field data from farmers. 
Cooperativity guarantees that an increase (decrease) of the nitrogen inputs induces an increase (decrease) of plant biomass outputs, 
a feature that simplifies control design. This is illustrated with a preliminary simulation study based on a distributed control strategy inspired by multi-agent consensus algorithms. 
Results show improvements at harvest even when realistic actuation and sensing constraints 
are considered.

In Section \ref{section:review} we briefly review the current modelling approaches in agriculture. 
Our model is introduced in Section \ref{section:model}. We demonstrate its cooperativity and show that the model is expressive enough to fit experimental data.
Section \ref{section:control} develops simulations for the  control of lettuce plants using proportional distributed feedback with actuation and sensing constraints.
Conclusions and future research directions are summarised in Section \ref{section:conclusion}.

\section{A review of models in agriculture} \label{section:review}

In agriculture, plants are managed by expert growers as open systems. 
Through interventions in the field, the plant's ``inputs", such as water or fertiliser, 
are driven to cause a change in the ``output'' yield at harvest, with weather conditions acting as disturbances. 
The growers' actions are usually driven by robust average ``personal" models based on direct practice and experience.
These contrast with the modelling practice of plant science, which focuses on the fine details of plant metabolism and structure. 
This difference is justified by the fact that 
agriculture is characterised by data scarcity and uncertainties due to genetic and environmental differences at the individual plant level.
In this environment, the predictive ability of high resolution scientific models is often 
compromised by the challenge of noisy measurements and uncertainties.

This divide is not new to feedback control theory. The first step in control is often to build a model that captures 
the fundamental features of a process within a suitable compromise between accuracy and complexity.
In this section we review existing mechanistic and functional-structural approaches to plant modelling. 
These models build a ``virtual'' plant to probe scientific hypothesis and to enable experimentation, with the
goal of reaching a deeper analytical understanding of the plant behaviour, \cite{HILTY2021-PlantGrowthWhatHowWhy}.  
They provide the basis for the derivation of a simplified lettuce model for agricultural growth control in Section \ref{section:model}. 

\textbf{Mechanistic plant models} mathematically describe the processes within the plant and their interaction with the environment. These models
are typically used to estimate yields (see \cite{PELAK2017DynamicalSystemsFrameworkCropModels}, \cite{ Pearson1997_modelEffectsEnvironmentOnGrowthLettuce}), predict harvest times, and predict crop growth. 
They can be designed to respond to environmental conditions such as weather, temperature, light conditions, fertiliser treatments, and irrigation. 
Mechanistic models can scale from individual plants to ecosystems by assuming linear superposition and by aggregating model states.

At the individual plant scale mechanistic models such as those produced by Thornley \cite{THORNLEY1998-ModellingShootRootRelations}, Pearson et al.  \cite{Pearson1997_modelEffectsEnvironmentOnGrowthLettuce}, and Harwood et al. \cite{HARWOOD2010_ModellingUncertaintyIceberg}, are often compartmental models which describe the flow of resources such as carbon and nitrogen through the plant. 
Pearson and Harwood model the flow of carbon through a plant using two compartments: one for carbon storage, and the second representing structural carbon, with photosynthesis modelled as a carbon inflow to the storage compartment. 
Their models also include the effect of temperature on the growth of structural biomass, photosynthesis, and senescence. 

The Thornley model includes both nitrogen and carbon flows. 
The plant is modelled with six compartments divided between the shoots and roots of the plant.
The shoots and roots each have a nitrogen store, carbon store and structural biomass. 
The plant absorbs nitrogen into the ``root nitrogen store" which flows with some resistance into the ``shoot nitrogen store". 
Similarly, the plant has a carbon inflow to the ``shoot carbon store" which flows to the ``root carbon store".
Carbon and nitrogen stores contribute to the development of structural biomass, with the root stores contributing to the growth of structural biomass in the roots, and shoot stores contributing to growth in the shoots.

\textbf{Functional-structural models} combine mechanistic elements with the morphology of plants. 
They offer a higher resolution simulation of individual plants compared to purely mechanistic models.
These models can be used to characterise the growth of individual organs of the plant, or specific behaviours such as self-competition, or detailed physical phenomena such as transport of water and carbon \cite{Lacointe2019-PiafMunch,Zhou2020-CPlantBox}.
As a result, functional-structural models are used as digital twins for simulation.
The resolution of these models can provide insights that would be challenging to observe in real plants, but this comes at the cost of increased computational complexity. 

While many functional-structural models include the full 3D geometry of the plant, some use a simpler topological representation. 
Topological models use graph-like structures with nodes and edges to represent plants.
Topological models are less computationally complex than those based on the 3D geometry of the plant.

A functional-structural model may be necessary to capture the input/output dynamics of some plants.
This is particularly important for plants where fruits are the output of interest.
In these cases, characterisation of the relevant output requires knowledge of plant morphology. 
However, the use of functional-structural models may be intractable for the real-time control of crops, where hundreds of thousands of plants may be associated to a model with a large state-space. 
Mechanistic compartmental models for individual plants are much less computationally complex than functional-structural models, while still capturing key information about the system. 
As mechanistic models describe the processes within plants they can be easily extended to open models, with controlled inputs, exogenous disturbances (environment), and measured outputs.

\section{Modelling for control} \label{section:model}
\subsection{Model description}

We propose an open compartmental mechanistic plant model designed for feedback control of
crop growth using variable-rate nitrogen application.
Our model is a simplification of the root-shoot growth model by Thornley \cite{THORNLEY1998-ModellingShootRootRelations}. 
Several functions are adapted to make the model open, capturing the effect of temperature and light variations from the environment and of nitrogen variations in the soil.

In the Thornley model there are six compartments divided between the plant's shoots (leaves and stems) and roots, so that each has a carbon store, nitrogen store, and structural biomass.
In practice, the measurement of the root compartments in agriculture is difficult. 
These compartments have to be estimated indirectly from noisy observations of the shoots, in the presence of model uncertainties. 
This undermines the level of detail the Thornley model provides.
Our model merges the roots and shoots resulting in a three compartment model, and uses a parameter to partition the biomass into shoots and roots, rather than estimating root states.  

In our model, the structural biomass of the plant is a single compartment, $\boldsymbol{b} \geq 0$. 
The portion of the plant's structural biomass that is attributed to the shoots is given by the parameter $\psi \in (0,1)$, so that $\psi \boldsymbol{b}$ is the shoot dry biomass of the plant and $(1-\psi)\boldsymbol{b}$ is the root dry biomass of the plant.
The plant has a carbon storage compartment, $\boldsymbol{c} \geq 0$, and a nitrogen storage compartment, $\boldsymbol{n} \geq 0$.  
This gives the (non-negative) state vector $\boldsymbol{x}^T = \begin{bmatrix} \boldsymbol{b},& \boldsymbol{c},& \boldsymbol{n} \end{bmatrix}$. 
The uptake of nitrogen from the soil creates an inflow to the $\boldsymbol{n}$ compartment, and photosynthesis provides an inflow to the $\boldsymbol{c}$ compartment. 
Both nitrogen and carbon are consumed in the production of structural biomass $\boldsymbol{b}$.
The flows between the compartments in the model are illustrated in Fig. \ref{fig:custom_model_compartments}.

\begin{figure}[h]
  \begin{center}
  \includegraphics[clip, trim=0.28cm 0.2cm 0.2cm 0.3cm,width=1\columnwidth,page=8]{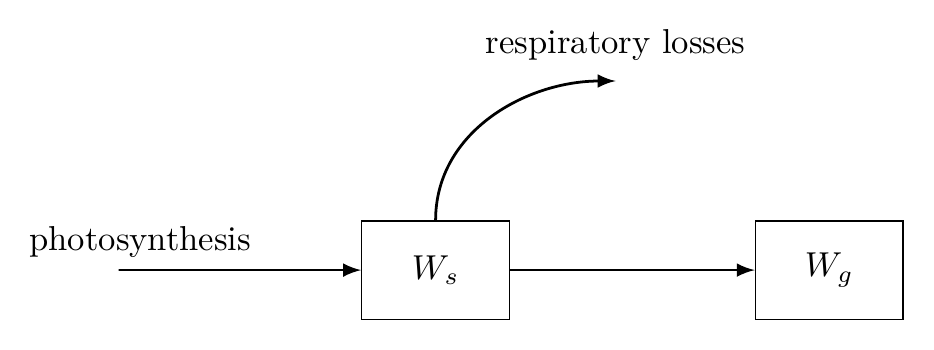}    
  \caption{The compartmental lettuce model: The red arrows show the flow of nitrogen through the system and the blue arrows show the flow of carbon.} 
  \label{fig:custom_model_compartments}
  \end{center}
  \end{figure}

\noindent  
The dynamics of an individual plant are described by 
\begin{subequations} \label{eqn:model}
  \begin{align} 
    \dot{\boldsymbol{b}} &= G(\boldsymbol{b},\boldsymbol{c}, \boldsymbol{n},\boldsymbol{T}) - L_b(\boldsymbol{b})\\
    \dot{\boldsymbol{c}} &= A_c(\boldsymbol{b},\boldsymbol{c},\boldsymbol{I}) - B_c(\boldsymbol{c},\boldsymbol{T}) \label{eqn:dc}\\
    \dot{\boldsymbol{n}} &= A_n(\boldsymbol{b},\boldsymbol{n},\boldsymbol{u}) - B_n(\boldsymbol{n},\boldsymbol{T}) \,, \label{eqn:dn}
  \end{align}
\end{subequations}
where 
\begin{subequations} \label{eqn:system-equations} 
  \begin{align} 
    G(\boldsymbol{b},\boldsymbol{c}, \boldsymbol{n}, \boldsymbol{T}) &= k R_T(\boldsymbol{T})\frac{\boldsymbol{c}}{\boldsymbol{b}}\frac{\boldsymbol{n}}{\boldsymbol{b}}\boldsymbol{b} \label{eqn:G}\\
    R_T(\boldsymbol{T}) &= \dfrac{T_\text{op} - |T_\text{op} - \boldsymbol{T}|}{T_\text{op}} \label{eqn:Te}\\
    L_b(\boldsymbol{b}) &= \dfrac{k_{l}\boldsymbol{b}}{1 + \frac{k_{ml}}{\boldsymbol{b}}}\label{eqn:Lb}\\
    B_c(\boldsymbol{c},\boldsymbol{T})&=\theta_c k R_T(\boldsymbol{T}) \frac{\boldsymbol{c}}{\boldsymbol{b}}\boldsymbol{b} \label{eqn:bc}\\
    B_n(\boldsymbol{n},\boldsymbol{T})&=\theta_n k R_T(\boldsymbol{T})\frac{\boldsymbol{n}}{\boldsymbol{b}}\boldsymbol{b} \label{eqn:bn} \\
    A_{c}(\boldsymbol{b,c,I})&=\dfrac{\sigma_c \psi\boldsymbol{b}  \boldsymbol{I}}{(1+\frac{\boldsymbol{\psi b}}{v})(1+\frac{\boldsymbol{c}}{\boldsymbol{\psi b} j_c})}\label{eqn:Ac}\\
    A_{n}(\boldsymbol{b,n,u})&=\dfrac{\sigma_n (1-\psi)\boldsymbol{b}  \boldsymbol{u}}{(1+\frac{\boldsymbol{(1-\psi) b}}{v})(1+\frac{\boldsymbol{n}}{\boldsymbol{(1-\psi) b} j_n})} \ .\label{eqn:An}
  \end{align}
\end{subequations}
Parameters are described in Table \ref{tab:params}.

The growth of structural biomass is given by the function $G(\boldsymbol{b},\boldsymbol{c}, \boldsymbol{n},\boldsymbol{T})$ in \eqref{eqn:G},  where $\boldsymbol{T}$ is temperature. 
The growth rate is proportional to the mass concentration of carbon in the plant, $\boldsymbol{c}/\boldsymbol{b}$, the mass concentration of nitrogen, $\boldsymbol{n}/\boldsymbol{b}$, and the plants current biomass, $\boldsymbol{b}$. 
We model temperature dependence in the structural growth of the plant by multiplying the rate with a unit triangle function, $R_T(\boldsymbol{T})$ in \eqref{eqn:Te}, centred on an optimal temperature, $T_\text{op}$. 
This models the effect of the growth rate peaking at an optimal temperature and declining away from it.
This is similar to the temperature dependence used by Pearson et al. in their lettuce growth model \cite{Pearson1997_modelEffectsEnvironmentOnGrowthLettuce}. 
The biomass loss equation, $L_b(\boldsymbol{b})$ in \eqref{eqn:Lb}, models the biomass loss rate as linear when $\boldsymbol{b}$ is large in comparison to $k_{ml}$, but with a reduction in loss when the $\boldsymbol{b}$ is small. 

The contributions from the carbon and nitrogen compartments to growth of structural biomass are modelled by the terms $B_c(\boldsymbol{c},\boldsymbol{T})$ and $B_n(\boldsymbol{n},\boldsymbol{T})$ in \eqref{eqn:model}.
These terms represent the loss of carbon and nitrogen due to the growth of the structural biomass and to other complex phenomena such as plant respiration and natural degradation.
The parameters $\theta_c$ and $\theta_n$ are the rate at which carbon and nitrogen respectively are consumed during these processes.
As in \eqref{eqn:G}, these functions are proportional to the mass concentration of the substrate and to the biomass of the plant.
To illustrate this the redundant $\boldsymbol{b}/\boldsymbol{b}$ is included in \eqref{eqn:bc} and \eqref{eqn:bn}.

The nitrogen uptake equation, $A_n(\boldsymbol{b},\boldsymbol{n},\boldsymbol{u})$ in \eqref{eqn:An},  and the photosynthesis equation, $A_c(\boldsymbol{b},\boldsymbol{c},\boldsymbol{I})$ in \eqref{eqn:Ac}, are similar. 
These functions have been adapted from the Thornley model, by adding proportional input variables to model the response to environmental conditions.
The carbon inflow due to photosynthesis is proportional to the light input to the plant, $\boldsymbol{I}$, and the shoot structural biomass $\psi \boldsymbol{b}$.
Similarly, the nitrogen inflow is proportional to the nitrogen available to the plant, $\boldsymbol{u}$, and the root structural biomass $(1-\psi)\boldsymbol{b}$.
The first term in the denominator, $(1+\frac{\boldsymbol{\psi b}}{v})$ in \eqref{eqn:Ac}, represents the saturation of nutrient assimilation due to the plant's size and models effects such as self-shading.
The second term, $(1+\frac{\boldsymbol{c}}{\psi \boldsymbol{b} j_n})$, models a saturation in nutrient assimilation due to a high mass concentration of the nutrient. 

\begin{table}[htbp]
  \captionsetup{width=1\columnwidth}
 \caption{Model parameters and their descriptions.
  Value examples are from the ``Good fit" timeseries in Figure \ref{fig:paramfit}.}  \label{tab:params}
 \begin{center}
   \begin{tabular}{cclc} 
     \hline
     Parameter  & Description & Value &Units\\
     \hline
     $k$ & Structural plant growth  & 1000 & $\text{s}^{-1}$\\
     $k_l$ & Litter loss rate & 0.149  & $\text{s}^{-1}$\\
     $k_{ml}$ & Litter saturation parameter  & 0.0221 & $\text{g}$\\
     $\sigma_c$ & Carbon assimilation rate & 0.260 & $\text{m}^{2}\text{W}^{-1} \text{s}^{-1}$\\
     $\sigma_n$ & Nitrogen assimilation rate & 70.0 & $\text{g}^{-1}\text{s}^{-1}$ \\
     $v$ & Self-shading saturation value  & 0.0620 & $\text{g}$\\
     $j_c$ & C product inhibition & 0.144 & -\\
     $j_n$ & N product inhibition  & 0.115 & -\\
     $\psi$ & Shoot portion of biomass & 0.718 & - \\
     $T_\text{op}$  & Optimal temperature & 22 & $^\circ\text{C}$\\
     $\theta_c$  & $\boldsymbol{c}$ biomass contribution rate & 6.89e-2& -\\
     $\theta_n$  & $\boldsymbol{n}$ biomass contribution rate & 5.57e-6& -\\
     \hline
   \end{tabular}
   \end{center}
 \end{table}

The control input, $\boldsymbol{u} \geq 0$, is the nitrogen availability in the soil.
The output of the model is the portion of the plant's biomass that belongs to the shoots of the plant, $\boldsymbol{y}=\psi \boldsymbol{b} \geq 0$.    
Temperature, $\boldsymbol{T}$, and light, $\boldsymbol{I}\geq 0$, are disturbance inputs to the model. 
For simplicity, in what follows we model them as time-varying parameters.
The system of differential equations can be represented in compact form as the single input single output system \eqref{eqn:statespace_combinedmodel} below, 
where  $C = \begin{bmatrix}\psi & 0 & 0 \end{bmatrix}$
and $f$ collects the right-hand side of \eqref{eqn:model}, with the parameter vector represented by $\theta$.
\begin{subequations} \label{eqn:statespace_combinedmodel}
  \begin{align}
    \dot{\boldsymbol{x}} &= f(\boldsymbol{x},\boldsymbol{u}; \theta) \\
    \boldsymbol{y} &= C\boldsymbol{x}
  \end{align}
\end{subequations}
\subsection{Monotonicity} \label{section:monotonicity}

If we restrict the input space of the model to the controlled nitrogen, 
by considering light and temperature as simple time-varying parameters, 
we can show that our model is an open cooperative system \cite{Angeli2003,smith1995-monotone}. 
This has several beneficial properties for control, the most significant of which 
is that an increase of the nitrogen input $\mathbf{u}$ leads to an increase of the shoot biomass $\mathbf{y}$.
This is important for control as the monotone relationship is preserved despite large uncertainties on the system parameters.

\begin{thm}
    \label{thm:model_cooperativity}
    The model \eqref{eqn:model} and \eqref{eqn:system-equations} described by the system \eqref{eqn:statespace_combinedmodel} 
    is an open cooperative system. 
\end{thm}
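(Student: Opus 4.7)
The plan is to verify the standard infinitesimal characterisation of an open cooperative system (see \cite{Angeli2003,smith1995-monotone}): restricted to the control input $\boldsymbol{u}$, with $\boldsymbol{T}$ and $\boldsymbol{I}$ treated as time-varying parameters, we need to show that on the interior of the positive orthant (i) the Jacobian $\partial f/\partial \boldsymbol{x}$ is Metzler, (ii) $\partial f/\partial \boldsymbol{u}$ is entrywise non-negative, and (iii) $\partial(C\boldsymbol{x})/\partial \boldsymbol{x}$ is entrywise non-negative. Condition (iii) is immediate from $C=[\psi,0,0]$ with $\psi\in(0,1)$. Condition (ii) reduces to $\partial A_n/\partial \boldsymbol{u} = \sigma_n(1-\psi)\boldsymbol{b}/[(1+(1-\psi)\boldsymbol{b}/v)(1+\boldsymbol{n}/((1-\psi)\boldsymbol{b}\,j_n))]\geq 0$, which holds because every factor is non-negative on the state domain.

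For (i), I would compute the six off-diagonal entries using the compact form \eqref{eqn:system-equations}. The entries $\partial\dot{\boldsymbol{b}}/\partial \boldsymbol{c}$ and $\partial\dot{\boldsymbol{b}}/\partial \boldsymbol{n}$ come only from $G$, and equal $k R_T(\boldsymbol{T})\boldsymbol{n}/\boldsymbol{b}$ and $k R_T(\boldsymbol{T})\boldsymbol{c}/\boldsymbol{b}$ respectively, both non-negative since $R_T(\boldsymbol{T})\geq 0$ on its effective range. The cross entries $\partial\dot{\boldsymbol{c}}/\partial \boldsymbol{n}$ and $\partial\dot{\boldsymbol{n}}/\partial \boldsymbol{c}$ vanish because $A_c,B_c$ depend only on $(\boldsymbol{b},\boldsymbol{c})$ while $A_n,B_n$ depend only on $(\boldsymbol{b},\boldsymbol{n})$. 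Moreover, after the simplification $B_c=\theta_c k R_T(\boldsymbol{T})\boldsymbol{c}$ and $B_n=\theta_n k R_T(\boldsymbol{T})\boldsymbol{n}$, these loss terms are independent of $\boldsymbol{b}$, so the remaining entries reduce to $\partial\dot{\boldsymbol{c}}/\partial \boldsymbol{b}=\partial A_c/\partial \boldsymbol{b}$ and $\partial\dot{\boldsymbol{n}}/\partial \boldsymbol{b}=\partial A_n/\partial \boldsymbol{b}$.

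The main algebraic obstacle is showing that $A_c$ is non-decreasing in $\boldsymbol{b}$ with $\boldsymbol{c}$ fixed, and symmetrically that $A_n$ is non-decreasing in $\boldsymbol{b}$ with $\boldsymbol{n}$ fixed. After substituting $w=\psi\boldsymbol{b}$ and clearing the product-inhibition denominator, $A_c$ is proportional to $w^2/[(1+w/v)(j_c w+\boldsymbol{c})]$; differentiating and discarding the positive denominator leaves a numerator of the form $j_c w^2+2\boldsymbol{c}w+\boldsymbol{c}w^2/v$, which is manifestly non-negative for $w,\boldsymbol{c}\geq 0$. The identical calculation with $(1-\psi)\boldsymbol{b}$ replacing $\psi\boldsymbol{b}$ and $\boldsymbol{n}$ replacing $\boldsymbol{c}$ handles $\partial A_n/\partial \boldsymbol{b}$, at which point the Metzler structure of $\partial f/\partial \boldsymbol{x}$ is established on the interior of $\mathbb{R}^3_{>0}$. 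Combined with conditions (ii) and (iii), the Kamke--Müller theorem yields cooperativity with respect to the standard positive orthant ordering in state, input, and output, proving the claim.
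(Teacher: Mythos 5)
Your proposal is correct and follows essentially the same route as the paper: verifying the Kamke conditions by checking $C\ge 0$, $\partial f/\partial \boldsymbol{u}\ge 0$, and the non-negativity of the off-diagonal Jacobian entries, with the only nontrivial step being $\partial A_c/\partial \boldsymbol{b}\ge 0$ and $\partial A_n/\partial \boldsymbol{b}\ge 0$ (your substitution $w=\psi\boldsymbol{b}$ reproduces exactly the sign-definite numerator the paper writes out explicitly). Your added care in noting that $R_T(\boldsymbol{T})\ge 0$ is needed for the $\partial G/\partial\boldsymbol{c}$ and $\partial G/\partial\boldsymbol{n}$ entries, and that $B_c,B_n$ are independent of $\boldsymbol{b}$, only makes the argument more complete than the paper's.
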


\begin{proof} 
    By the Kamke conditions \cite{smith1995-monotone,Angeli2003}: the system is cooperative if $C \ge 0$ (element-wise),
    for all $i$
    $$
    e_i^T \frac{\partial f(\mathbf{x},\mathbf{u}; \theta)}{\partial \mathbf{u}} \ge 0 \,,
    $$ 
    and for all $i \neq j$
    $$
    e_i^T \frac{\partial f(\mathbf{x},\mathbf{u}; \theta)}{\partial \mathbf{x}} e_j \ge 0 \,,
    $$
    where $e_i$ is the $i$-th column of the  identity matrix.

    By inspection of the Jacobian matrices for our model, every off-diagonal element is positive, therefore the system is cooperative.
    This is illustrated in Figure \ref{fig:cooperativity-graph}, where the off-diagonal elements of the Jacobian matrices are plotted as the edges of a directed graph with each edge marked with its sign.
    The positivity of the edge $(\mathbf{u},\mathbf{n})$ is given by
    $\frac{\partial A_n}{\partial u} \geq 0$. 
    The positivity of the edge $(\mathbf{n},\mathbf{b})$ is given by 
    $\frac{\partial G}{\partial n} \geq 0$ and $\frac{\partial A_n}{\partial b} \geq 0$. 
    The latter follows from the fact that 
    $$\frac{\partial A_n}{\partial \boldsymbol{b}} = \frac{\boldsymbol{u} \boldsymbol{b}  v j_n \sigma_n \gamma^2 (2 \boldsymbol{n} v + \boldsymbol{b} \boldsymbol{n} \gamma + \boldsymbol{b} v j_n \gamma) }{(\boldsymbol{n} + \boldsymbol{b} j_n \gamma)^2 (v + \boldsymbol{b} \gamma)^2}\ge 0,$$
    where $\gamma = (1-\psi) > 0$.
    Likewise, the positivity of the edge $(\mathbf{c},\mathbf{b})$ is given by 
    $\frac{\partial G}{\partial c} \geq 0$ and $\frac{\partial A_c}{\partial b} \geq 0$. 
    The latter follows from the fact that 
    $$\frac{\partial A_c}{\partial \boldsymbol{b} }  =  \frac{  \boldsymbol{I}  \boldsymbol{b}  v j_c  \sigma_c \psi^2(2 \boldsymbol{c}  v + \boldsymbol{b}  \boldsymbol{c}  \psi + \boldsymbol{b}  v j_c \psi)}{(\boldsymbol{c} + \boldsymbol{b}  j_c \psi)^2 (v + \boldsymbol{b}  \psi)^2}  \ge 0.$$
    Finally, the positivity of the edge $(\mathbf{b},\mathbf{y})$ follows from $C \geq 0$. 
\end{proof}

\begin{figure}[h]
  \begin{center}
  \includegraphics[clip, trim=0cm 1cm 0cm 1.1cm ,width=0.95\columnwidth]{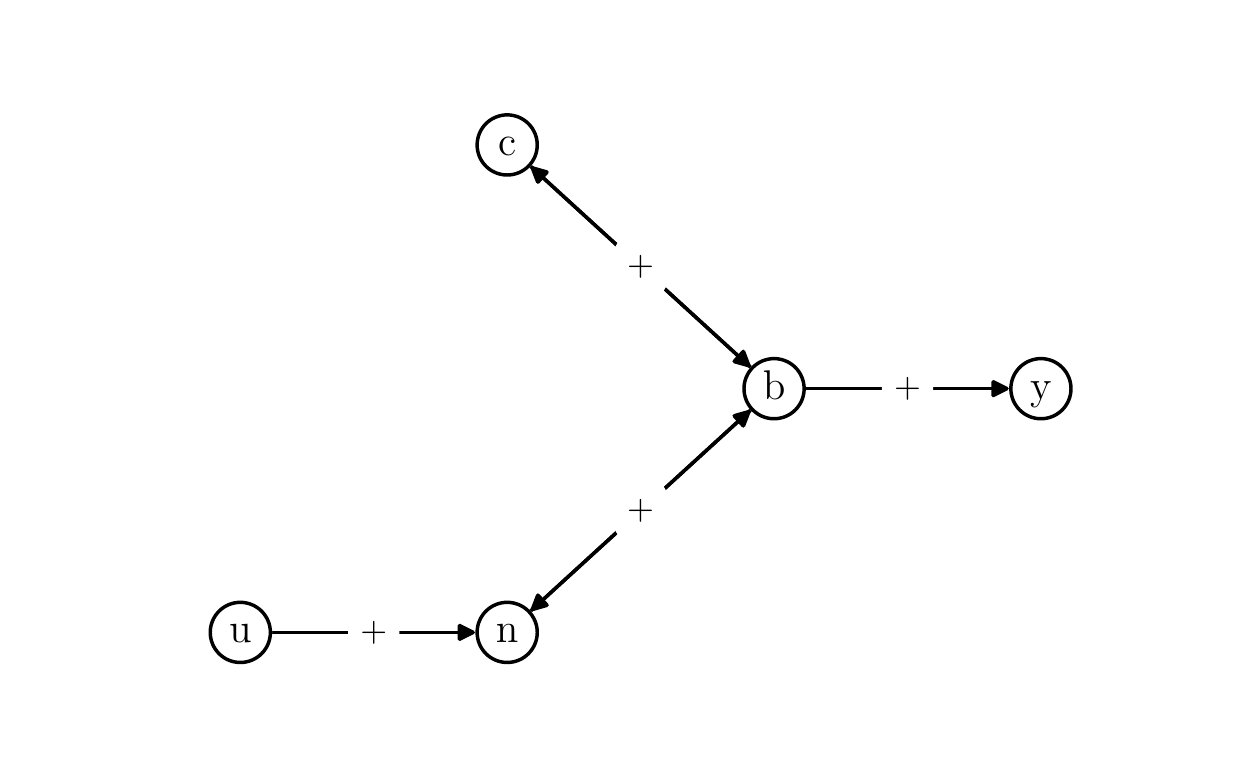}    
  \caption{Sign graph associated to the linearization of \eqref{eqn:statespace_combinedmodel}.} 
  \label{fig:cooperativity-graph}
  \end{center}
  \end{figure}

\subsection{Parameter Fitting}

In simulation, the evolution in time of the biomass follows a typical logistic function as shown in Fig. \ref{fig:paramfit}.
This logistic curve captures the initial slow start of a plant, followed by a period of intense growth activity, which eventually slows down before harvest. 
We tested the ability of our model to capture realistic growth by fitting the model to experimental data from lettuce plants: first, a dataset of plants grown in an indoor grow tent; second, a dataset of plants from a real agricultural environment.

We grew 19 iceberg lettuce plants in an indoor grow tent in the Department of Engineering at the University of Cambridge, and produced a time series for the area of each plant based on segmented images.
We converted this to dry biomass using the area and mass at harvest, a factor of 0.1 was used to convert from fresh biomass to dry.
We fit our model to this dataset with the results shown in Figure \ref{fig:paramfit_fullFigure}. 
The parameter fitting for the models was performed using the python function \emph{scipy.optimize.minimize}, which implements the Broyden–Fletcher–Goldfarb–Shann algorithm with box constraints, \cite{byrd1995-LBFGSB}, with a mean least squares cost function. 
We calculated the residuals between the model predicted shoot biomass $\boldsymbol{y}=\psi \boldsymbol{b}$ and the observed shoot biomass of the dry biomass timeseries. 
The normalised root mean square error (NRMSE) for all 19 lettuce plants are summarised in the histogram in Figure \ref{fig:paramfitagripod_hist}. 
Figure \ref{fig:paramfit} shows a plot of the trajectory and observed data for three example cases which represent good, average, and poor fitting.
An example of fitted parameter values are given in Table \ref{tab:params}.
Despite the limitations of our small dataset, Figure \ref{fig:paramfitagripod_hist} illustrates that our model has enough flexibility to capture the growth of most plants, with few outliers.

To assess the ability of the model to fit to realistic agricultural data, which has a low sample rate, we fitted the model to a dataset of 1000 batches of outdoor lettuce plants. 
This dataset was provided by G's Growers (a major growers' cooperative and lettuce producer in the United Kingdom).
Each entry in this dataset consists of sampled plants that were picked and weighed throughout the growing period of the batch. 
Each timeseries in the dataset had between 3 and 12 data points (a good illustration of the sampling frequency expected in outdoor agriculture).
The wet mass of the plants was converted to a dry biomass using a factor of 0.1 and used to fit the model.
A histogram of the normalised root mean square error (NRMSE) between the model and data is shown in Figure \ref{fig:paramfithist}. 
A selection of raw data and fit responses are shown in Figure \ref{fig:paramfithist_examples}.
Figure \ref{fig:paramfithist_hist} shows that, in the majority of cases, our model performs well with a low NRMSE and the ``average fit'' in the dataset fits the available data well.

These results show that our model is capable of capturing the behaviour of annual plants in which the growth of the plant is the output of interest. 
The parameter fitting has been performed on lettuce, but we believe that similar results can be obtained on other crops where the size of individual plants is directly related to yield, such as salads and brassicas.

\begin{figure}[h]
  \begin{center}
    \vspace{2mm}
  \subfloat[][Model fit to lettuce data. Data from a lettuce plant grown in an indoor grow tent. Cross marks are datapoints.
  Solid curves represent simulated shoot biomass $\boldsymbol{y}=\psi \boldsymbol{b}$ after fitting.
  \label{fig:paramfit}]{\includegraphics[width=0.83\columnwidth]{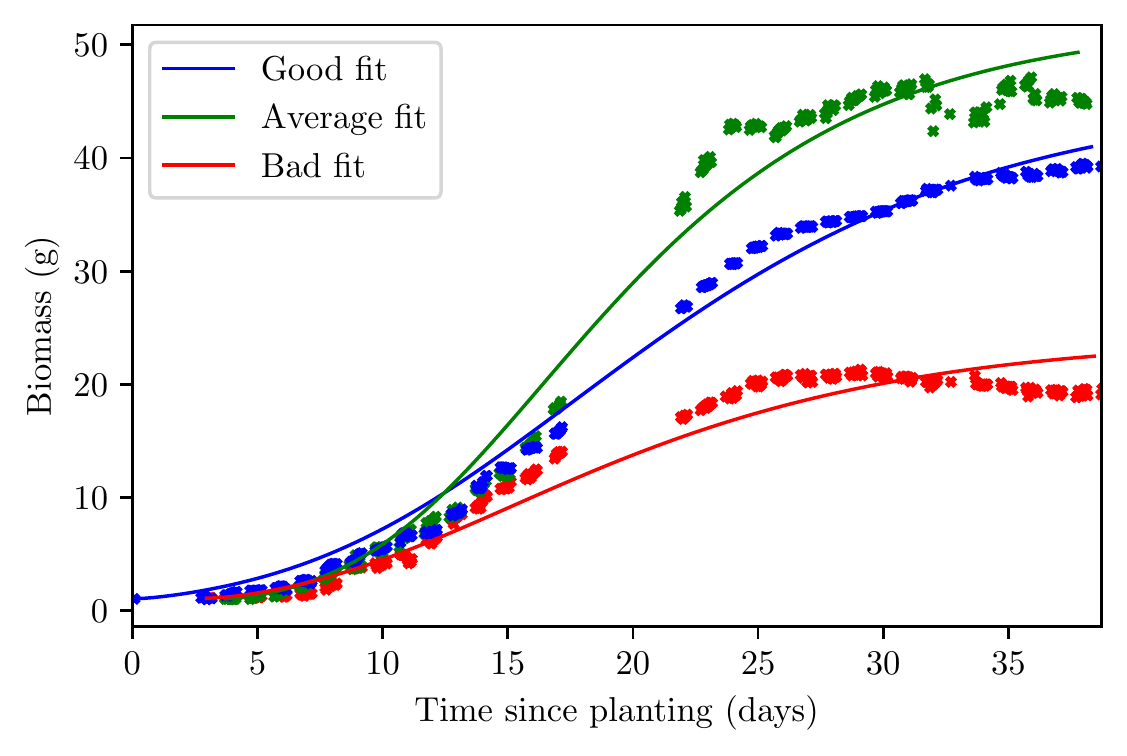} }   
  \\
  \subfloat[][A histogram of the NRMSE between fit models and observed data for the dataset collected for plants grown in an indoor grow tent. The coloured lines correspond to the examples in Figure \ref{fig:paramfit}. \label{fig:paramfitagripod_hist}]{\includegraphics[width=0.83\columnwidth]{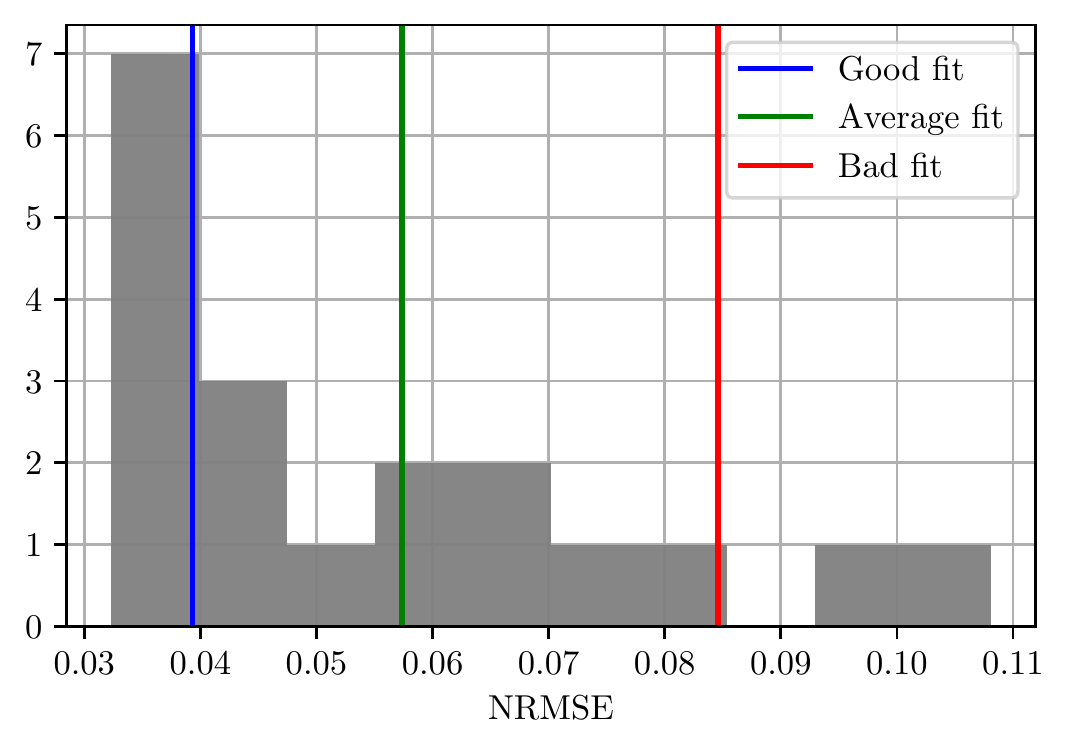}   }   
  \caption{The evolution of the model fit to data collected in an indoor grow tent  and the NRMSE for the full dataset.} 
  \label{fig:paramfit_fullFigure}
  \end{center}
  \end{figure}

\begin{figure}[h]
  \begin{center}
    \vspace{2mm}
  \subfloat[][Three examples of parameter fitting for the agricultural dataset. The solid line shows the trajectory of the simulated model output $\boldsymbol{y}=\psi \boldsymbol{b}$ while cross marks are datapoints.  \label{fig:paramfithist_examples}]{\includegraphics[width=0.84\columnwidth]{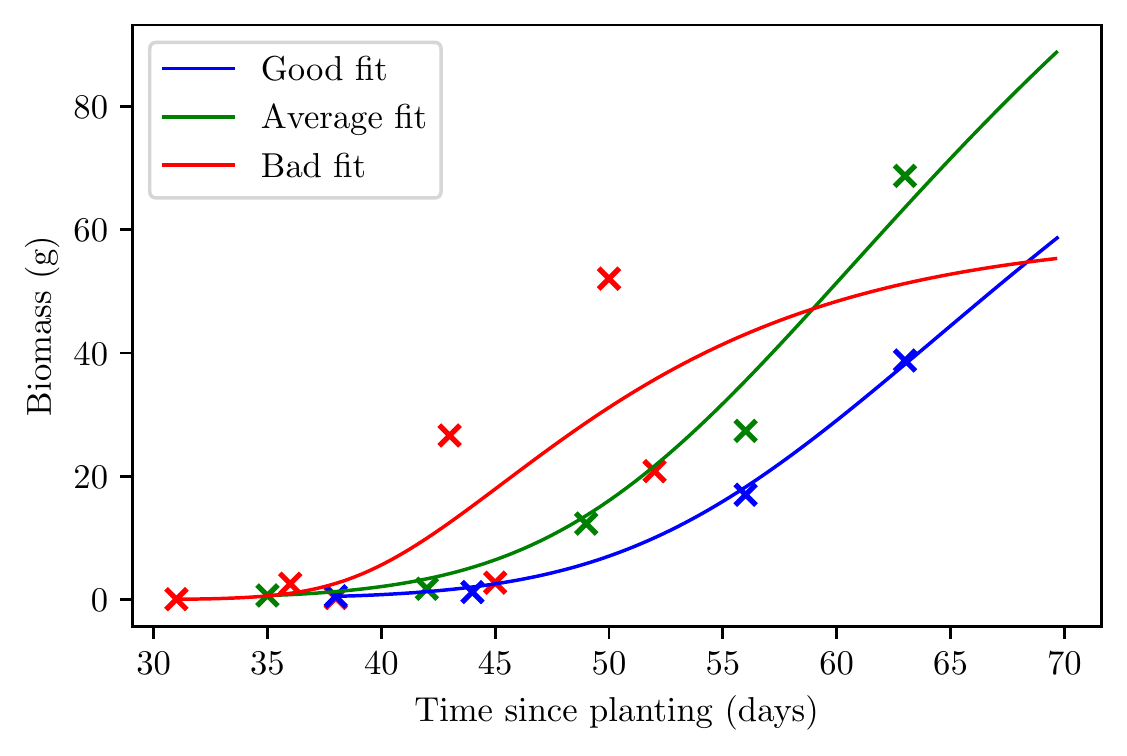}}   
  \\
  \subfloat[][A histogram of the NRMSE between fit models and observed data from an agricultural environment. The coloured lines correspond to the examples in Figure \ref{fig:paramfithist_examples}. \label{fig:paramfithist_hist}]{\includegraphics[width=0.84\columnwidth]{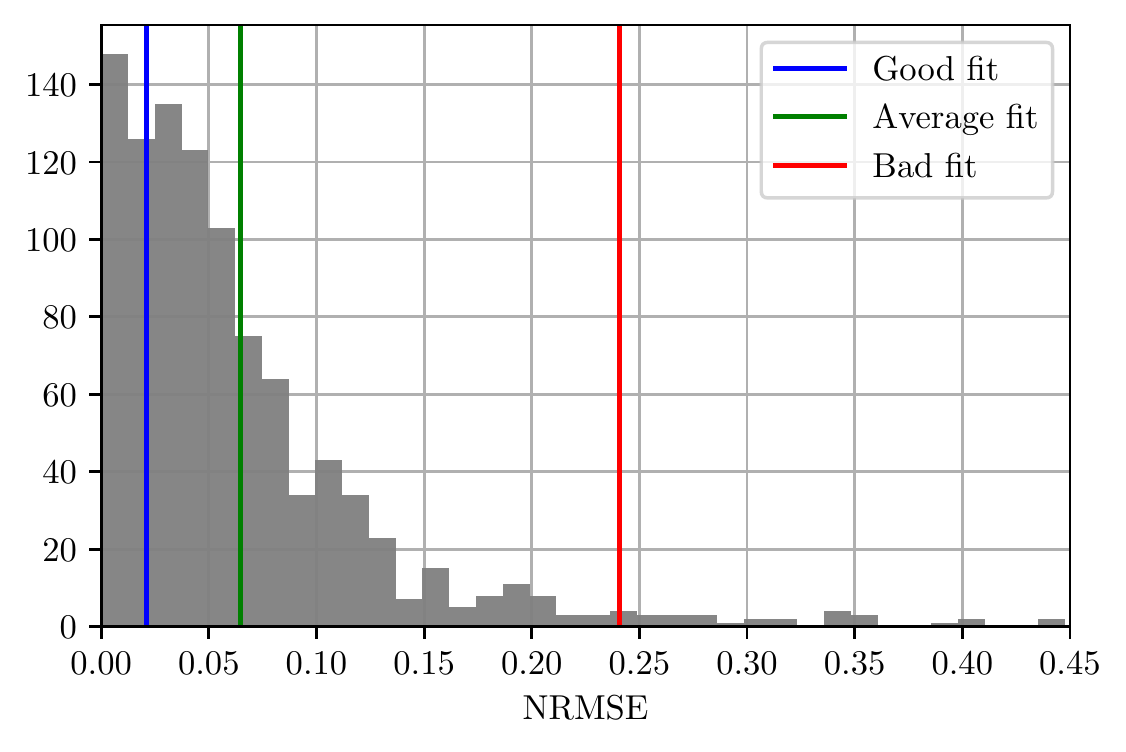}   }   
  \caption{Parameter fitting performance with agricultural data. The examples of ``good" and ``average" fit shows good proximity to the data. The ``bad'' is also
  justified by the large noise affecting the data.} 
  \label{fig:paramfithist}
  \end{center}
  \end{figure}

\section{Control in the field} \label{section:control}

\subsection{The control problem}
The main objective of a farmer is to maximise the profit from harvest.
This can be achieved with an increase in yield and/or with a reduction in resource use. 
For high value produce, such as lettuce or other produce with higher value than grain, a significant portion of the harvested crop is discarded due to failing to meet buyer specifications.
Improving the crop uniformity at harvest thus results in a higher yield, with larger portion of produce meeting the specification.
This is the motivation for our preliminary study below, where we use our model to show how simple variable-rate nitrogen control strategies lead to higher uniformity and can increase yield. 
The study also shows that the improvement in uniformity enables us to reduce the  total amount of nitrogen in comparison to the current agricultural practice based on uniform rate application. 
This is important as it allows a reduction in fertiliser costs and avoids unnecessary pollution.
In summary, the control problem is to \emph{find a control strategy for variable-rate nitrogen application to increase uniformity at harvest and reduce total nitrogen use.}

Accurate regulation of nitrogen can be achieved through variable rate application.
Current approaches are either heuristic or based on optimisation. 
Heuristic approaches divide the field into management zones which are given different treatments based on either the size of plants within them or historic data.
Guerrero and Mouazen carried out a simulated experiment in which they assigned fertility classes to management zones based on satellite imagery, historic yield data, and soil data. 
Additional fertiliser was then applied to lower fertility classes and resulted in 33-56.2\% reduction in fertiliser use in comparison to the uniform scenario \cite{GUERRERO2021_EvaluationVRANscenariosInCeralCrops}.

Optimisation based methods have also been shown to be effective and are not necessarily confined to nitrogen.
Wu et al.  performed optimal control for the water supply to virtual sunflower plants \cite{Wu2012-OptimalControlPlantgrowthWaterSupply}.
Neto et al.  computed the optimal nitrogen supply for growth of crop and designed a controller for an industrial sludge plant to implement this \cite{Neto2021-MPCSludgePlantGrowth}. 
We refer the reader to \cite{Cobbenhagen2021-OpportunitiesControlInPrecisionAgriculture,Schoonen2019-OptimalIrrigationMPC} for further examples.

However, optimal control methods rely on accurate crop models. 
This is a demanding requirement, as large model uncertainties are unavoidable in current agriculture practice.
This is illustrated by Figure \ref{fig:free_plants} which shows the uncontrolled growth of 100 simulated plants based on \eqref{eqn:model}, \eqref{eqn:system-equations}, with a constant level of nitrogen $\bar{u}=0.075$g available to each plant. 
In this plot, model parameters are normally distributed with a standard deviation of 5\% from the nominal values in Table \ref{tab:params}. 
Figure \ref{fig:free_plants} shows how, over the course of the plant's life, these small uncertainties result in significant changes in the growth of the plant.
The dashed black line represents a threshold below which lettuce plants are rejected due to buyer specifications. 
In this paper the threshold was arbitrarily set to the 10th percentile of the final biomass for the uncontrolled scenario. 
This corresponds to  a dry shoot biomass of 36.6g.

\begin{figure}[h]
  \begin{center}
  \includegraphics[width=0.85\columnwidth]{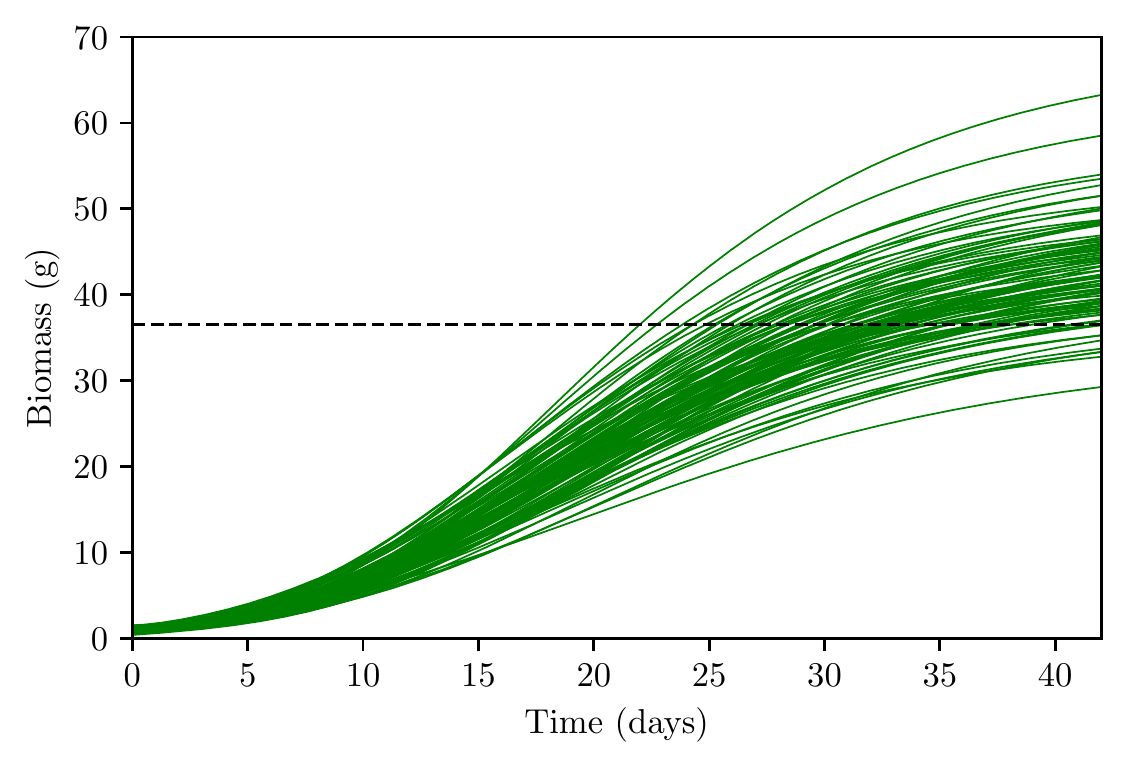}    
  \caption{Uncontrolled growth of 100 simulated plants  \eqref{eqn:model},\eqref{eqn:system-equations}, with $\boldsymbol{u}=\bar{u}=0.075$g for each plant. The black dashed line represents the rejection threshold (36.6g).} 
  \label{fig:free_plants}
  \end{center}
  \end{figure}

\subsection{Consensus-based multi-agent solution}

We propose, an alternative approach to heuristic or optimisation based methods, by considering the field as a multi-agent system on which we deploy consensus algorithms \cite{Olfati-Saber2003-consensusProtocolsforNetworksDynamicAgents,OlfatiSaber2007-ConsensusAndCooperationInNetworkedMultiAgentSystems}. 
The goal is to derive a coordinated variable-rate nitrogen application strategy for each plant to increase uniformity.
We look at plants as agents, with the nitrogen inputs as controllable and affecting their growth, and estimate the dry shoot biomass of the plant through remote sensing.
Even though consensus cannot be achieved through differential nitrogen application alone in a heterogeneous field, the control action will reduce the variance in plant growth and increase uniformity.

The control strategy is rooted on Theorem \ref{thm:model_cooperativity}, which guarantees that an increase in the available nitrogen input will result in an increase in the plants growth and a decrease in the nitrogen input will result in reduced growth, regardless of uncertainties in the model parameters.
This is illustrated in Figure \ref{fig:effec_of_nitrogen}.
These simulation results show the monotone relationship between the plant biomass at day 50 and different constant levels of nitrogen $\mathbf{u}$.
Each curve represents a different set of parameters, highlighting that modelling uncertainties do not affect the order preserving properties of our cooperative model. 

\begin{figure}[h]
  \begin{center}
    \vspace{2mm}
  \includegraphics[width=0.85\columnwidth]{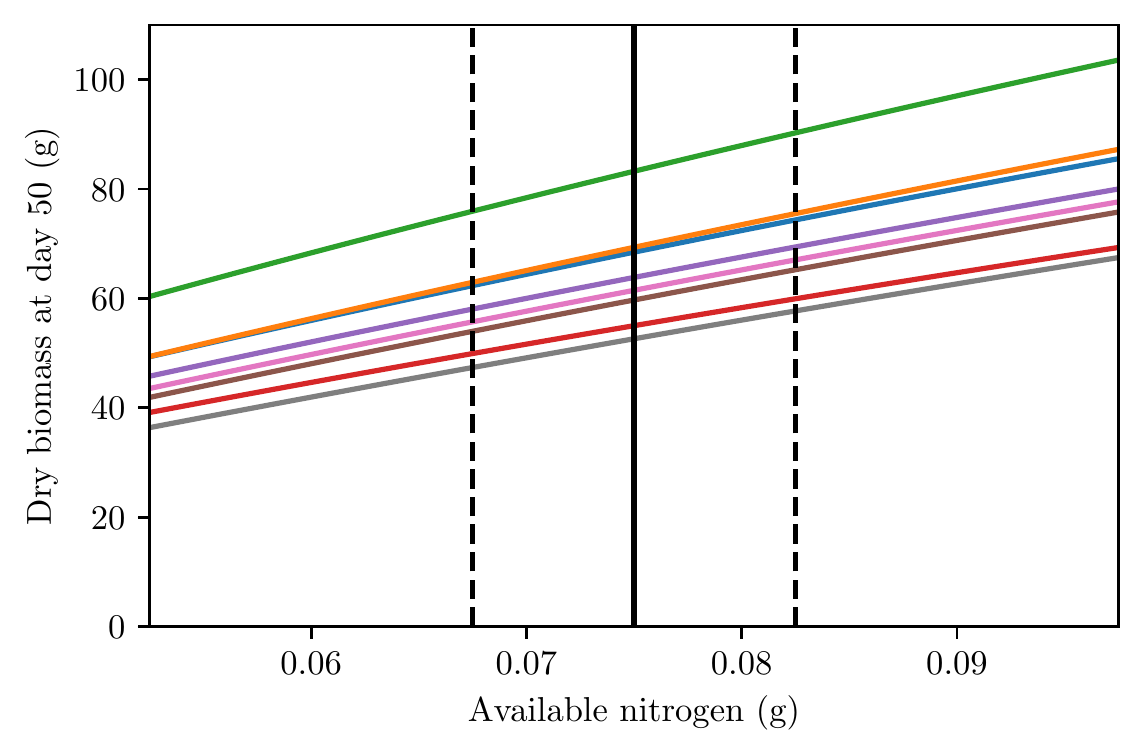}    
  \caption{Sensitivity of biomass to available nitrogen. Effect of constant nitrogen levels on the size of the plant biomass, $\boldsymbol{b}$, at day 50. 
  Each curve is related to a different set of parameters normally distributed about those given in Table \ref{tab:params}.
  The dashed lines represent the saturation function adopted for feedback.
  The solid black vertical line is $\bar{u}$.} 
  \label{fig:effec_of_nitrogen}
  \end{center}
  \end{figure}

The simplest approach is a piecewise constant application strategy for each plant based on proportional feedback. For each plant $i$ we consider the strategy 
\begin{equation} \label{eqn:prop_global}
  \mathbf{u}_i = \bar{u} + \text{sat}(k[\bar{y}-\mathbf{y}_i]).
\end{equation}
where $\bar{u}=0.075$g is a uniform constant value of nitrogen applied to the field, perturbed by 
$\mathrm{sat}(k[\bar{y}-\mathbf{y}_i])$ which penalises the displacement between the current
plant growth $\mathbf{y}_i$ and the field average growth 
$\bar{y} = \frac{1}{N}\sum_{i=1}^N \mathbf{y}_i $. 
The function $\mathrm{sat}(\cdot)$ corresponds to a piecewise linear, symmetric saturation function limiting $u_i$ to the range $0.0675 = \bar{u}-u_\text{range}<u_i<\bar{u}+u_\text{range} = 0.0825$. 

Equation \eqref{eqn:prop_global} applies higher nitrogen levels to plants that are underdeveloped with respect to the average of the field, and reduces nitrogen levels applied to overdeveloped plants.  
Each plant received a nitrogen application of $\mathbf{u}_i$ every day, computed with $k = 0.05$, and modelled as a piecewise constant signal.
Results are illustrated in Figure \ref{fig:control_prop_to_mean}.
The reduction of variance is evident, in comparison to the uncontrolled case in Figure \ref{fig:free_plants}.

The ideal control scenario shown in Figure \ref{fig:control_prop_to_mean} is not realistic due to practical constraints in the field.
We address several of these challenges below:
\begin{itemize}
\item[i)] Daily application of nitrogen is not realistic due to costs and constraints in the availability of machinery. In Figure \ref{fig:control_lowsample} we explore the more realistic case of applications performed every 14 days, the ``sparse'' scenario. 
This better reflects the limitations of actuation and sensing.
\item[ii)] Computing the average growth $\bar{y}$ can also be expensive for a large field. This can be mitigated by using distributed control laws of the form
\begin{equation} \label{eqn:prop_local}
  \boldsymbol{u}_i = \bar{u} + \text{sat}\left(\dfrac{k}{|\mathcal{N}_i|}\sum_{j\in \mathcal{N}_i}(\boldsymbol{y}_j -\boldsymbol{y}_i)\right)
\end{equation}
where $\mathcal{N}_i$ is the set of plants that belong to the local neighbourhood of the plant $i$. 
In simulation, we assign the neighbourhood of each plant by placing the plants in a virtual grid.
The neighbours of plant $i$ are the adjacent and diagonal plants in the virtual grid, giving each plant up to $8$ neighbours.
Figure \ref{fig:control_lowsample_local} shows simulation results with both sparse applications and the local control law.
\item[iii)] Finally, noisy measurements must be taken into account in any realistic scenario. 
This case is explored in Figure \ref{fig:control_lowsample_local_and_noise}, where each observation $\mathbf{y}_i$ has additive Gaussian noise with standard deviation of $\sigma = 0.1 \mathbf{y}_i$.
\end{itemize}
These cases show a progressive degradation of performance. 
However, each modification results in a reduced variance in comparison to the uncontrolled case of Figure \ref{fig:free_plants}.

\begin{figure}
\begin{center}
  \subfloat[][Ideal control. \label{fig:control_prop_to_mean}] {\includegraphics[width=0.48\columnwidth]{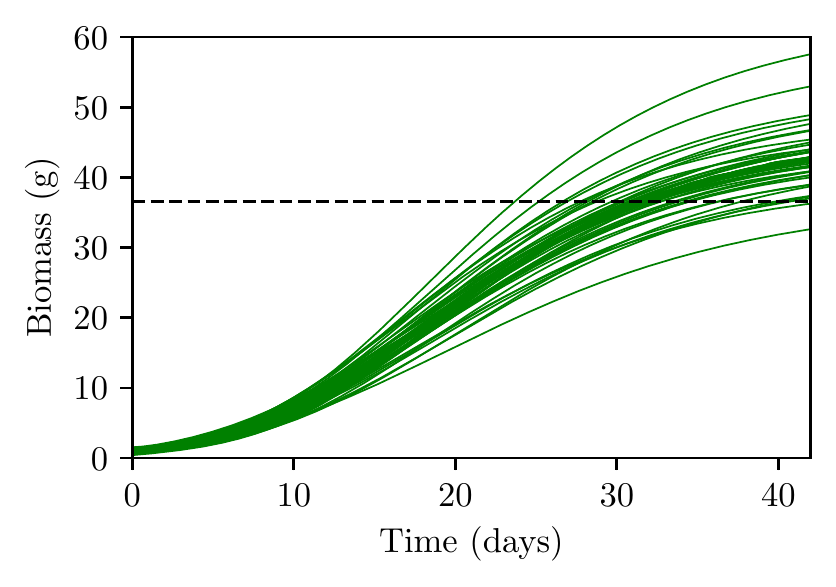}}
  \hfill 
  \subfloat[][Sparse control.\label{fig:control_lowsample}] {\includegraphics[width=0.48\columnwidth]{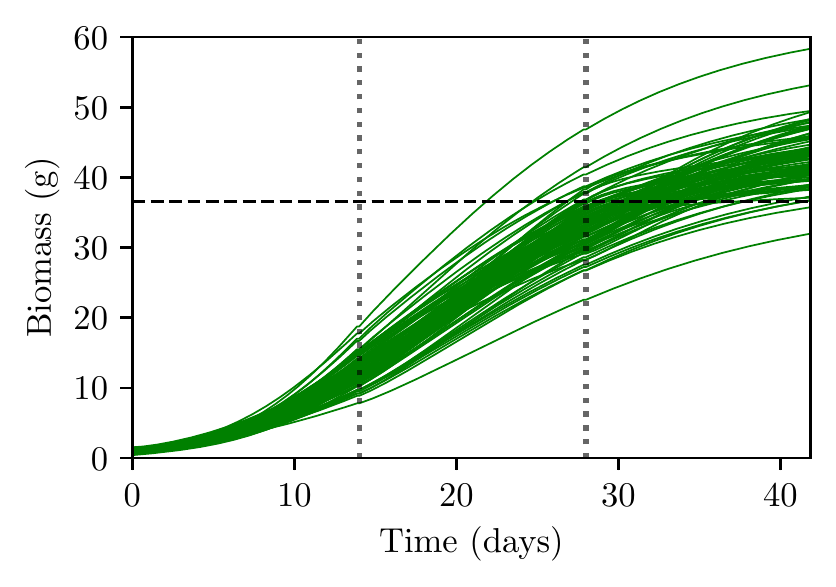}}
  \\
  \subfloat[][Sparse local control.
  \label{fig:control_lowsample_local}] {\includegraphics[width=0.48\columnwidth]{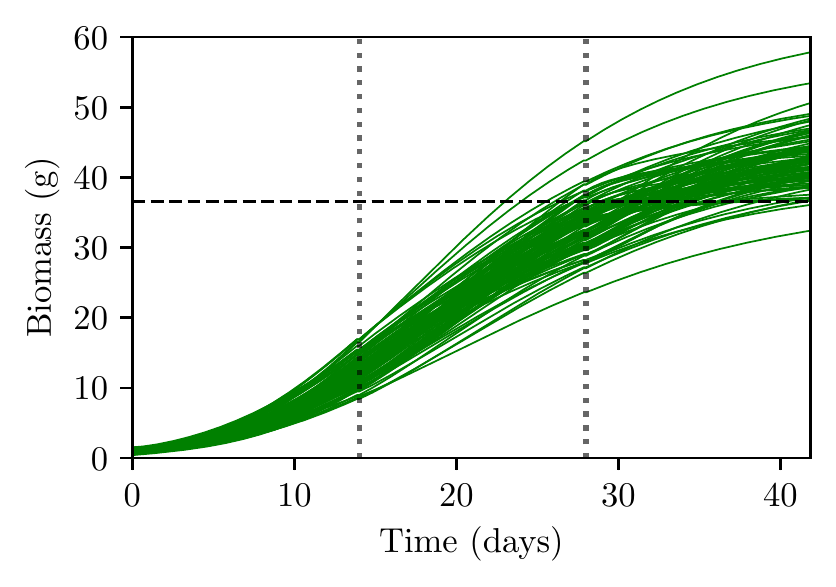}}     
  \hfill
  \subfloat[][Sparse local control with noise.
  \label{fig:control_lowsample_local_and_noise}] {\includegraphics[width=0.48\columnwidth]{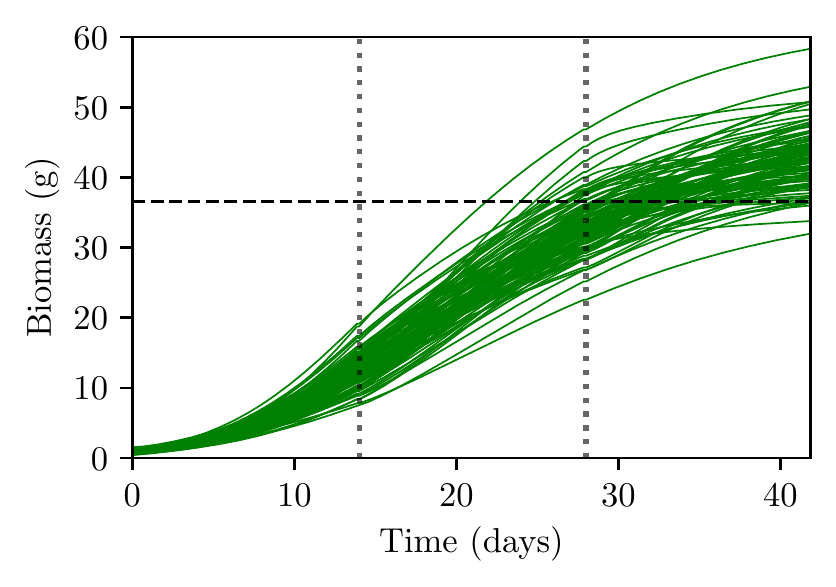}}     

  \caption{Control of the shoot biomass, $y=\psi\boldsymbol{b}$, in various simulation scenarios. In Figures \ref{fig:control_lowsample}, \ref{fig:control_lowsample_local}, and \ref{fig:control_lowsample_local_and_noise} the vertical dotted lines show when the sparse control updated the nitrogen available to the plant. The horizontal dashed line indicates the rejection threshold for the lettuce crop, set at $36.6$g the 10th percentile in the uncontrolled scenario.}
\end{center}
\end{figure}

The ideally controlled case has a significantly reduced variance, with the variance reduced from 
$29.73 \text{g}^2$ 
in the uncontrolled scenario to 
$7.91\text{g}^2$ 
in the ideally controlled scenario.  
As shown in figure \ref{fig:ideal_comparison}, this resulted in the portion of crop meeting the rejection threshold increasing from 90\% to 98\% without an increase in nitrogen availability.
In the simulation above, the ideal control scenario consumed 99\% of the nitrogen used in the uncontrolled scenario.

Figure \ref{fig:result_comparison} shows box plots summarising the final shoot biomass for the uncontrolled scenario, ideally controlled scenario, and the three sparsely controlled scenarios. 
The figure shows that the control performance was reduced in the sparse control scenarios with the use of the local mean controller, and the introduction of noise.
The sparse control scenario had a variance of 
$13.97\text{g}^2$, 
while the sparse local control had a variance of
$14.07\text{g}^2$. 
In both of these scenarios the portion of crop that met the threshold was 98\%, matching the ideal control scenario.
The scenario with added noise shows degraded performance in comparison to the ideal control case. 
However, as shown in Figure \ref{fig:noisy_comparison}, it still provides an improvement from the uncontrolled scenario, with a coefficient of variation of 
$17.76\text{g}^2$ 
and 94\% of plants meeting the rejection threshold.

\begin{figure}[h]
  \begin{center}
    \subfloat[][Uncontrolled vs Ideal control. \label{fig:ideal_comparison}] {\includegraphics[width=0.48\columnwidth]{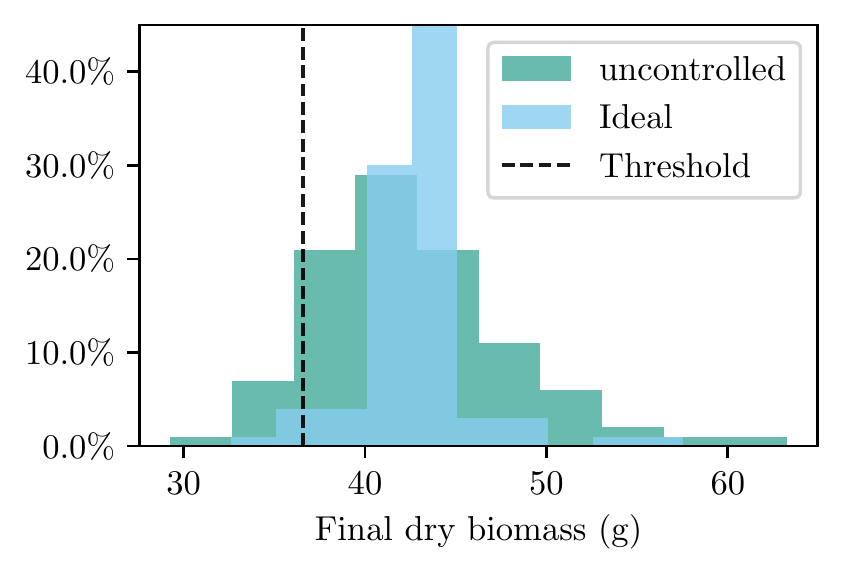}}
    \hfill 
    \subfloat[][Uncontrolled vs Sparse local control with noise.\label{fig:noisy_comparison}] {\includegraphics[width=0.48\columnwidth]{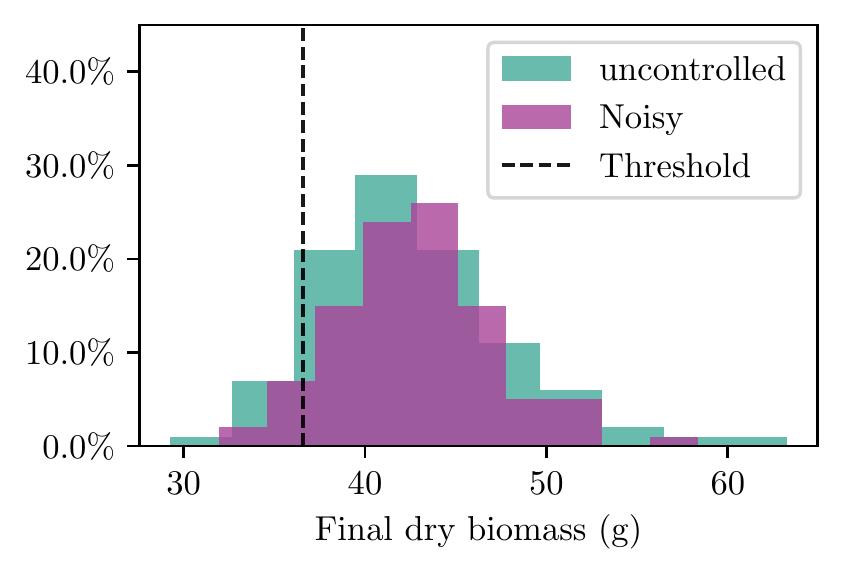}}
    \caption{Comparison between the histogram in the uncontrolled case, and the histograms for the ideally controlled and sparse local control with noise scenarios. }
  \end{center}
  \end{figure}

  \begin{figure}[h]
    \begin{center}
      \subfloat[][Uncontrolled vs Ideal vs Sparse cases with $\bar{u} = 0.075$g. \label{fig:result_comparison}] {\includegraphics[width=0.85\columnwidth]{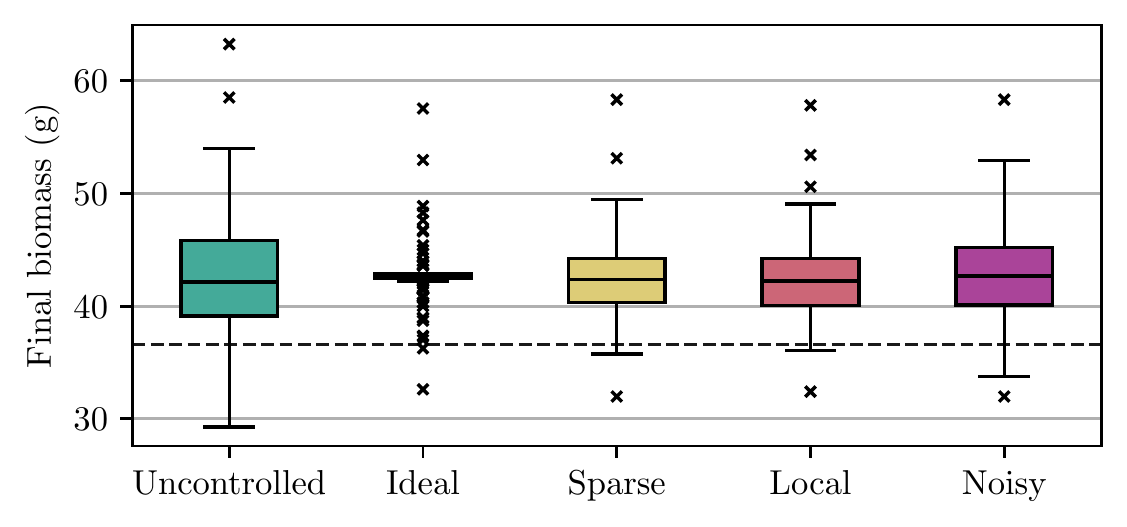}}\\
      \subfloat[][Uncontrolled vs Ideal vs Sparse cases with $\bar{u} = 0.073$g, $0.0655< u_i < 0.0805$. In the uncontrolled case $\bar{u} = 0.075$g is unchanged. \label{fig:boxplot_reduced}] {\includegraphics[width=0.85\columnwidth]{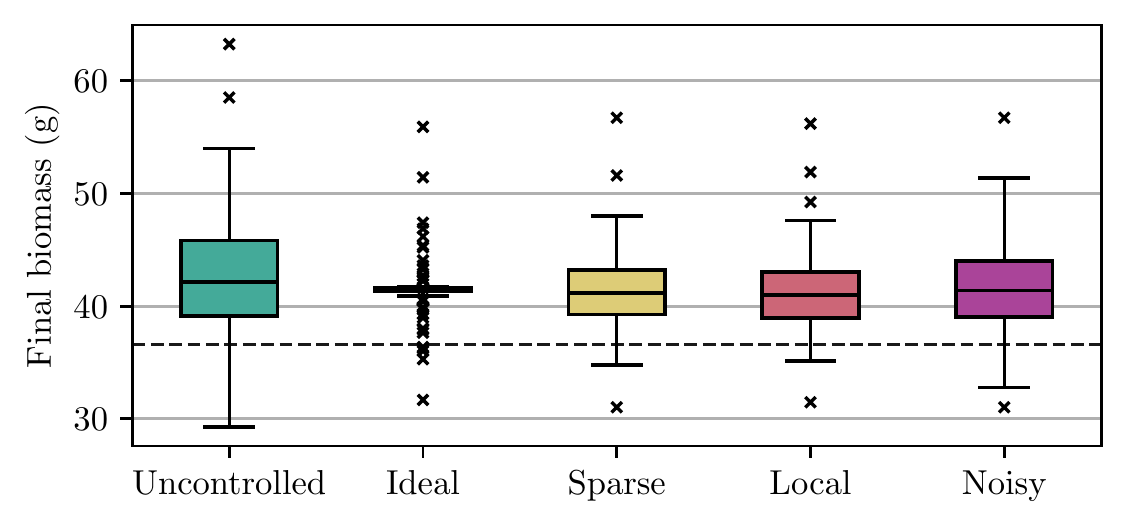}} 
      \caption{Boxplots for the Uncontrolled vs Ideal vs Sparse cases. In \ref{fig:boxplot_reduced} $\bar{u}$ is reduced demonstrating that an improvement in uniformity enables a reduction in the amount of nitrogen applied.}
  
    \end{center}
    \end{figure}

\addtolength{\textheight}{-3cm}   

The improvement in uniformity may be sufficient to reduce the average nitrogen input $\bar{u}$ to the field resulting in substantial reductions in nitrogen use without a loss in yield. 
This is explored in Figure \ref{fig:boxplot_reduced} where $\bar{u} = 0.073$g.
In the ideal and sparse control scenarios, 95\% and 94\% of lettuces respectively are above the rejection threshold, outperforming the uncontrolled scenario. 
In the local scenario 91\% of plants are above the threshold matching the uncontrolled scenario.
In the noisy scenario 90\% of plants were above the threshold, matching the uncontrolled scenario.
In all of these cases nitrogen use was substantially reduced, with the worst performing scenario consuming 4.8\% less nitrogen than in the uncontrolled case. 
For the farmer even a small percentage improvement in yield or decrease in fertiliser use is financially significant and an improvement in yield of 4\%, as in the noisy case with $\bar{u}=0.075$g, or the reduction of fertiliser use of $4.8$\% in this case is substantial.


\section{Conclusion} \label{section:conclusion}
In this paper we have presented a model for plant growth and preliminary simulation results controlling a field as a multi-agent system in simulation.
Our model is well suited to control of lettuce in an agriculture as it is a mechanistic model with a small state space and cooperative behaviour.
The model performs well when fitted to datasets from real plants.

Controlling a field as a multi-agent system and applying a basic distributed law to the individual plants is a promising method for the generation of individualised fertiliser prescriptions and decision making in precision agriculture.
When nitrogen was applied daily this control strategy improved the uniformity of the simulated crop significantly.
Uniformity was also improved in more realistic scenarios, where nitrogen was applied every two weeks and noise was introduced.

Our future work will focus on improving the controller design, providing performance guarantees for our system, and implementing the controller in a real agricultural process.
This will require the integration of supporting technologies such as remote sensing and variable rate application, and solving the related challenges of a real system with noisy, sparse data sources and infrequent actuation that have been discussed in simulation in this paper.





\bibliography{references.bib}

\end{document}